 \theoremstyle{plain}
\newtheorem{theorem}{Theorem}[section]
\theoremstyle{definition}
\newtheorem{remark}[theorem]{Remark}
\theoremstyle{remark}
\numberwithin{equation}{section}
\begin{document}

\captionsetup[figure]{labelfont={bf},labelformat={default},labelsep=period,name={Fig.}}

\title[A gradient descent algorithm for computing circle patterns]{A gradient descent algorithm for computing circle patterns}

\author{Te Ba}
\address{School of Mathematics, Hunan University, Changsha 410082, China}
\email{batexu@hnu.edu.cn}
\author{Ze Zhou}
\address{School of Mathematical Sciences, Shenzhen University,  Shenzhen 518060, China }
\email{zhouze@szu.edu.cn}
%\thanks{The author is supported by NSFC (No.12371075 and No.11631010).}
%\textsc{Ze Zhou} \\
%\normalsize School of Mathematical Sciences, Shenzhen University,  Shenzhen 518060, China \\
%\normalsize{zhouze@szu.edu.cn}
%}

\date{}

\subjclass{Primary 52C26; Secondary 68U05, 65D18.}%55U05,
%, 68T40 70B15, 55R80

\begin{abstract}
This paper presents a new algorithm for generating planar circle patterns. The algorithm employs gradient descent and conjugate gradient method to compute circle radii and centers separately. Compared with existing algorithms, the proposed method is more efficient in computing centers of circles and is applicable for realizing circle patterns with possible obtuse overlap angles.
\end{abstract}

\maketitle
%\tableofcontents

\section{Background}

Circle patterns are configurations of circles with prescribed combinatorial structures.
They were first studied by Koebe~\cite{zbMATH03027510} and received widespread attention when rediscovered by Thurston~\cite{MR1435975} in his study of $3$-manifolds.
Thurston~\cite{thurston1985finite} further conjectured that circle patterns could be used to approximate conformal mappings. This was later confirmed by Rodin and Sullivan~\cite{MR906396}, paving the way for a wide range of practical applications of circle patterns in computational conformal geometry~\cite{10.1145/1138450.1138461}, medical imaging~\cite{HURDAL2004S119}, mesh generation~\cite{MR1791192} and others.
Many of these applications rely on constructing circle patterns on geometric structures, which has spurred the development of various computational approaches.
Representative algorithms include those based on variational principle~\cite{colin1989empilements}, Thurston's continuous method~\cite{MR1975216}, Beurling's boundary value problem approach~\cite{MR2886749}, and radius-center alternating iterations~\cite{MR3638944}.
For a comprehensive survey, we refer to the works of Stephenson~\cite{MR2131318} and Bowers~\cite{MR4264578}.

The aim of this paper is to develop a new algorithm for realizing planar circle patterns, offering a practical framework for constructing conformal mappings from a simply connected domain onto a polygon with prescribed interior angles.

\section{Computational Methods for Circle Patterns}
\subsection{Preparations}
Let $D$ be a topological polygon.
Let $T=(V,E,F)$ be a triangulation of $D$, and $\mu$ a metric on $D$ such that $(D,\mu)$ is a Euclidean simple polygon.
A circle pattern $\mathcal P$ on $(D,\mu)$ is a collection of circles centered in $D$ and $\mathcal P$ is said to be $T$-type if there exists a geodesic triangulation $T_\mu$ of $(D,\mu)$ satisfies the following properties:
($i$) $T_\mu$ is isotopic with $T$;
($ii$) The vertices of $T_\mu$ coincide with the center of $\mathcal P$.
It has been shown by Stephenson~\cite{MR2131318} and generalized by Ge, Hua and Zhou~\cite{MR4334399} and by Jiang, Luo and Zhou~\cite{MR4109913} that for any triangulation $T$ of $D$, there exists a unique (up to similarities) metric $\mu$ on $D$ such that $(D,\mu)$ has prescribed interior angles and supports a $T$-type circle pattern $\mathcal{P}$ with prescribed overlap angles between each pair of intersecting circles.

To obtain such circle patterns, we first recall Thurston's construction~\cite[Chap. 13]{MR1435975}.
Let $\Theta:E\to[0,\pi)$ be an overlap angle weight and let $r\in\mathbb{R}_+^{|V|}$ be a vector assigning each $v\in V$ a radius of $r_v$. For each edge $uv\in E$, the edge length $l_{uv}$ is given by
\begin{equation}\label{E-1}l_{uv}=\sqrt{r_u^2+r_v^2+2r_ur_v\cos\Theta_{uv}}.\end{equation}
For each face $uvw\in F$, let us denote
\begin{equation}\label{E-2}I_u^{vw}=\cos \Theta_{vw} + \cos \Theta_{uv} \cos \Theta_{uw}.\end{equation}
Then we can construct a Euclidean triangle $\Delta_{uvw}$ with side lengths $l_{uv}$, $l_{vw}$ and $l_{uw}$ under the condition $I_u^{vw}\geq0$, $I_v^{uw}\geq0$ and $I_w^{uv}\geq0$ (see~\cite{MR4334399,MR4109913}).
By gluing these triangles along their common edges, one obtains a cone metric on $D$ with singularities at the vertices in $V$.
For each vertex $v\in V$, let $\sigma_v$ denote the sum of the interior angles of all triangles incident to $v$ and let \( V_{\partial} \subset V \) denote the set of boundary vertices.
The discrete curvature at each vertex is defined by
\[
K_v =
\begin{cases}
\sigma_v-2\pi, & \text{if } v \in V\setminus V_{\partial},\\
\sigma_v-\theta_v, & \text{if } v\in V_{\partial},
\end{cases}
\]
where $\theta_v$ is the prescribed interior angle at $v\in V_{\partial}$.
We seek a radius vector $r_{\mathcal P}$ such that the cone singularities vanish, i.e., $K_v(r_{\mathcal P})=0$ for each $v\in V$.
With this radius vector, we can then construct the desired circle pattern realizing the prescribed data (see~\cite[Chap. 13]{MR1435975}).

\subsection{Algorithms}
A $T$-type circle pattern $\mathcal P$ can be uniquely realized by its radii and centers, denoted as $\mathcal P=(r,z)\in\mathbb{R}^{|V|}_+\times\mathbb{C}^{|V|}$.
In this subsection, we present an algorithm for computing the radii and centers of $\mathcal P$ separately.

The first step is to compute the target radii.
We introduce an energy function
 $\mathcal E:\mathbb{R}^{|V|}_+\to\mathbb{R}$ defined by
\[\mathcal E(r)=\sum_{v\in V} K_v(r)^2.\]
Evidently, a radius vector of $\mathcal P$ is a critical point of $\mathcal E$, as it attains its minimum.
Moreover, Ge, Hua and Zhou~\cite{MR4334399}, Li, Luo and Xu~\cite{MR4781993} proved that \(\mathcal{E}\) admits no other critical points.
Hence, gradient descent can be applied to $\mathcal E$ to approximate the target radius vector.
Given a test radius vector $r^{(0)}\in\mathbb{R}_+^{|V|}$ and a step size \(\eta > 0\), the sequence of approximate radii \(\{ r^{(k)} \}_{k\in\mathbb{N}}\) is generated by
\[
 r^{(k+1)}  =  r^{(k)}  - \eta\, \nabla \mathcal E( r^{(k)} ).
\]

\begin{remark}
The approach of computing circle pattern radii by minimizing an energy function was introduced by Colin de Verdi{\`e}re~\cite{colin1989empilements} and developed by Chow and Luo~\cite{MR2015261}.
As an improvement, the energy $\mathcal E$ can be optimized via nonlinear least-squares methods to speed up convergence.
\end{remark}

The second step is to determine the circle centers of $\mathcal P$.
We first fix the centers of a pair of adjacent boundary circles.
Then the center assignment $z_{\mathcal P}\in\mathbb C^{|V|}$ of $\mathcal P$ is uniquely determined.
Recall that $\theta_v$ is the prescribed interior angle at $v$. Using the radius vector obtained in the first step, the centers of the remaining boundary circles can be successively computed, since each triple of consecutive boundary vertices $u$, $v$ and $w$ satisfies
\begin{equation}\label{E-5}
|z_{v}-z_{w}|=l_{vw},\quad\angle z_uz_vz_w=\theta_v.
\end{equation}
Next, we define the discrete Dirichlet energy $\Psi:\mathbb{C}^{|V|}\to\mathbb{R}$ by
\begin{equation}\label{E-3}\Psi(z)=\sum_{uv\in E}\frac{\partial K_u}{\partial \log r_v}(r_{\mathcal P})|z_u-z_v|^2.\end{equation}
Owing to the results of Ge Hua and Zhou~\cite{MR4334399}, the coefficient can be written as
\[
\begin{aligned}
\frac{\partial K_u}{\partial \log r_v}=\frac{\partial K_v}{\partial \log r_u}&=\sum_{uvw\in F}\frac{r_u r_v\left[\sin^2 \Theta_{uv}r_u r_v + (I_u^{vw}r_u + I_v^{uw}r_v)r_w\right]}{l^2_{uv}A_{uvw}},\\
\end{aligned}
\]
where $l_{uv}$ is given in equation~\eqref{E-1}, $I_u^{vw}$ is given in equation~\eqref{E-2} and $A_{uvw}$ is the area of the triangle with side lengths $l_{uv}$, $l_{vw}$, $l_{uw}$.
 Dubejko~\cite{MR1476986} shows that %for each interior vertex $v$,
\begin{equation}\label{E-4}\sum_{u\sim v}\frac{\partial K_u}{\partial \log r_v}(r_{\mathcal P})(z_{u}-z_{v})\Big|_{z=z_{\mathcal P}}=0.\end{equation}
Here $u\sim v$ means $u$ is a neighbor of $v$.
By substituting the center vector of boundary circles into \(\Psi\), combining~\eqref{E-3} and~\eqref{E-4}, we see that the center vector of the interior circles is a critical point of \(\Psi\).
Note that $\frac{\partial K_u}{\partial \log r_v}(r_{\mathcal P})\geq0$ under the condition  $I_u^{vw}\geq0$ for each $uvw\in F$.
Hence, \(\Psi\) is a quadratic function of the interior circle centers, with a symmetric positive definite Hessian.
For numerical computation, the conjugate gradient method provides an efficient way to minimize $\Psi$.
Let $x^{(0)}=(1,\cdots,1)^{\mathrm{T}}$ and $y^{(0)}=(1,\cdots,1)^{\mathrm{T}}$.
The iteration can be performed separately for the $x$- and $y$-coordinates by
\[
x^{(k+1)} = x^{(k)} + l^{(k)}\alpha^{(k)}, \quad
y^{(k+1)} = y^{(k)} + d^{(k)} \beta^{(k)},
\]
where $l^{(k)}$, $d^{(k)}$ (resp. $\alpha^{(k)}$, $\beta^{(k)}$) denote the step sizes (resp. descent directions), obtained by applying the conjugate gradient method applied to $\Psi$.
Combining the previously obtained approximation of the boundary circle centers, we obtain an approximation $z^{(n)}\in\mathbb{C}^{|V|}$ to the center vector of circles of $\mathcal P$.
\begin{remark}
From the perspective of geometric analysis on graphs, Equation~\eqref{E-4} shows that the centers of $\mathcal P$ are discrete harmonic on the $1$-skeleton of $T$ and $\Psi$ is the associated discrete Dirichlet energy.
This discrete harmonicity also applies to computing circle centers for planar circle patterns with prescribed radii of boundary circles as discussed in Ba, Li and Xu~\cite{MR4531772}, Wegert, Roth and Kraus~\cite{MR2886749}.
\end{remark}

An outline of the numerical procedure for computing circle patterns is summarized in Table~\ref{TA-1} and the convergence of the algorithm is given in Theorem~\ref{T-1}.

\begin{table}[h]
\caption{Algorithm Pseudocode}
    \centering
\begin{tabular}{lp{0.85\linewidth}}
\hline
\textbf{Input:}  & $T=(V,E,F)$, overlap angle weight $\Theta:E\to[0,\pi)$, interior angle weight $\theta:V_{\partial}\to(0,\pi]$. \\
\hline
\multicolumn{2}{c}{\textbf{Step 1: Computing Circle Radii}} \\
\hline
\textbf{Initialize:} &   Initial radius vector $r^{(0)}$, step size $\eta$, tolerance $\varepsilon$. \\
\textbf{Repeat:}

& 1. Compute $\nabla\mathcal E(r_k)$.\\
  &2. Update $r^{(k+1)} =r^{(k)} - \eta \nabla\mathcal E(r^{(k)})$.\\
  &3.  $k \leftarrow k + 1$.\\
\textbf{Until:} & $|\nabla\mathcal E(r^{(k)})| < \varepsilon$.\\

\textbf{Output:} & Approximate radius vector $r^{(k)}$ \\
\hline

   \multicolumn{2}{c}{\textbf{Step 2: Computing Circle Centers}} \\
   \hline

\textbf{Initialize:} & $r^{(k)}$, boundary centers $z_1,\cdots ,z_m$, initial center vector $z^{(0)}$, tolerance $\varepsilon$.\\
\textbf{Repeat:}
& 1.~Compute  $ l^{(n)}$, $ d^{(n)}$, $\alpha^{(n)}$, $\beta^{(n)}$.\\
& 2.~Update $x^{(n+1)} = x^{(n)} + l^{(n)}\alpha^{(n)}$, $y^{(n+1)} = y^{(n)} + d^{(n)}\beta^{(n)}$.\\
& 3.~Update $z^{(n+1)} =(x^{(n+1)},y^{(n+1)})$.\\
& 4.~$n \leftarrow n + 1$.\\
\textbf{Until:} &$\|\nabla\Psi(z^{(n)})\| < \varepsilon$.\\

\textbf{Output:} &  Approximate center vector  $z^{(n)}$. \\
\hline

\end{tabular}
\label{TA-1}
\end{table}

\begin{theorem}\label{T-1}
For any $\varepsilon>0$, there exists a step size $\eta\in\mathbb{R}_+$ and $k,n\in\mathbb{N}$ such that
\[
\| r^{(k)} - r_{\mathcal P} \| < \varepsilon, \quad
\| z^{(n)} - z_{\mathcal P} \| < \varepsilon.
\]

\end{theorem}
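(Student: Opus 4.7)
The plan is to treat the two stages of the algorithm separately, establishing convergence of the gradient descent for the radii, then convergence of the conjugate gradient iteration for the centers once the radii are known approximately.

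\textbf{Step 1: radii.} I would first argue that the set of admissible radii
\[ \Omega = \{ r \in \mathbb{R}_+^{|V|} : I_u^{vw}(r) > 0 \text{ for every face } uvw, \text{ and all edge-length triangle inequalities hold} \} \]
is open and that $\mathcal E$ is $C^2$ on $\Omega$. By the cited theorems of Ge--Hua--Zhou and Li--Luo--Xu, $\mathcal E$ has a unique critical point $r_{\mathcal P}\in\Omega$, which is the global minimum with $\mathcal E(r_{\mathcal P})=0$; moreover, their properness-type estimates imply that, for any $c>0$, the sublevel set $L_c=\{r\in\Omega:\mathcal E(r)\le c\}$ is compactly contained in $\Omega$. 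Choosing the initial radius vector $r^{(0)}\in\Omega$ and setting $c=\mathcal E(r^{(0)})$, the Hessian $\nabla^2\mathcal E$ is bounded on $L_c$, so $\nabla\mathcal E$ is Lipschitz on $L_c$ with some constant $L$. A step size $\eta\in(0,2/L)$ guarantees monotone descent and keeps $r^{(k)}\in L_c$. The standard gradient-descent estimate then yields $\|\nabla\mathcal E(r^{(k)})\|\to 0$; since $L_c$ is compact and $r_{\mathcal P}$ is the only critical point of $\mathcal E$ in $\Omega$, every limit point of $\{r^{(k)}\}$ is $r_{\mathcal P}$, hence $r^{(k)}\to r_{\mathcal P}$. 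This produces the required $k$ and $\eta$ for the first inequality.

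\textbf{Step 2: centers.} Once $\|r^{(k)}-r_{\mathcal P}\|$ is small, continuity of the edge lengths \eqref{E-1} and the closed-form construction for consecutive boundary centers via \eqref{E-5} show that the computed boundary centers depend continuously on $r^{(k)}$ and therefore approximate the true boundary centers of $\mathcal P$ to any desired accuracy. With the exact coefficients at $r_{\mathcal P}$, the functional $\Psi$ (after fixing the boundary centers) is a strictly convex quadratic in the interior centers, since $\partial K_u/\partial\log r_v(r_{\mathcal P})\ge 0$ and the corresponding weighted graph Laplacian, with Dirichlet boundary conditions on $V_\partial$, is positive definite. By Dubejko's identity \eqref{E-4}, its unique minimizer is the restriction of $z_{\mathcal P}$ to interior vertices. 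The conjugate gradient method applied to a strictly convex quadratic in $N$ variables terminates (in exact arithmetic) at the exact minimizer in at most $N$ steps, so $z^{(n)}=z_{\mathcal P}$ for $n$ sufficiently large.

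\textbf{Coupling the two steps.} To obtain the second inequality in the theorem, I would use a perturbation argument. With $r^{(k)}$ in place of $r_{\mathcal P}$, both the coefficient matrix of the quadratic form $\Psi$ and the boundary data in \eqref{E-5} are perturbations of their exact counterparts, uniformly small in $\|r^{(k)}-r_{\mathcal P}\|$. Continuity of eigenvalues of symmetric matrices preserves positive definiteness for $r^{(k)}$ close enough to $r_{\mathcal P}$, so conjugate gradient still converges to the unique minimizer of the perturbed quadratic; a standard stability estimate for linear systems then bounds the distance between this minimizer and $z_{\mathcal P}$ by a constant multiple of $\|r^{(k)}-r_{\mathcal P}\|$. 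Tightening $\varepsilon$ at Step 1 gives the required $n$ at Step 2.

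\textbf{Main obstacle.} The delicate point is verifying that the gradient-descent iterates remain in the admissible region $\Omega$ (where the face inequalities $I_u^{vw}\ge 0$ and the triangle inequalities persist), rather than escaping to the boundary of the configuration space. This is handled by combining monotone decrease of $\mathcal E$ along the iterates with the compactness of sublevel sets supplied by the prior work of Ge--Hua--Zhou and Li--Luo--Xu; identifying and citing the precise form of this properness statement is the step that deserves the most care.
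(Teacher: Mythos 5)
Your overall architecture coincides with the paper's proof: both arguments split into (i) convergence of the radius iteration, (ii) convergence of the center iteration for a fixed radius vector, and (iii) a coupling step that bounds $\| z^{(n)} - z_{\mathcal P}\|$ by the triangle inequality through the exact center vector $z(r^{(k)})$ associated with the approximate radii. Your ``perturbation argument'' in the coupling step --- stability of the minimizer of the positive definite quadratic under perturbation of its coefficients and of the boundary data from \eqref{E-5} --- is exactly the paper's assertion that $z(r)$ is $C^1$, hence locally Lipschitz, in $r$. Where you differ is only in the level of detail: the paper simply invokes ``the convergence of gradient descent'' for Step 1, whereas you attempt to justify it via compact sublevel sets and a Lipschitz gradient; and your observation that conjugate gradient terminates exactly on a strictly convex quadratic is stronger than the $\varepsilon/2$-approximation the paper actually uses.

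One concrete flaw in the detail you add: the sublevel sets $L_c=\{r\in\Omega:\mathcal E(r)\le c\}$ are never compact, because each curvature $K_v$ is invariant under the scaling $r\mapsto\lambda r$ (all edge lengths \eqref{E-1} scale by $\lambda$, so every triangle is replaced by a similar one and all angles are unchanged). Hence $\mathcal E$ is homogeneous of degree zero, $L_c$ contains entire rays, and for the same reason the critical point $r_{\mathcal P}$ is only unique up to scaling. The argument must be run on a normalized slice (for instance $\sum_{v}\log r_v=0$), on which the properness results of Ge--Hua--Zhou and Li--Luo--Xu do supply the compactness and uniqueness you need. This is a repair rather than a fatal gap --- and it concerns a step the paper itself does not prove --- but as written your compactness claim is false.
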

\begin{proof}
Given a radius vector $r\in\mathbb{R}_{+}^{|V|}$, let $z(r)\in\mathbb{C}^{|V|}$ denote the exact center vector of $\mathcal P$ determined by~\eqref{E-5} and~\eqref{E-4}.
Obviously, $z(r)$ is $C^1$ with respect to $r$.
Then there exists $\delta, L>0$ such that
\[\Vert z(r)-z_{\mathcal P}\Vert\leq L\Vert r-r_{\mathcal P}\Vert\]
when $\Vert r-r_{\mathcal P}\Vert<\delta$.
Due to the convergence of gradient descent, for any $\varepsilon>0$, there exists a step size $\eta>0$ and $k_0\in\mathbb{R}_+$, $n_0\in\mathbb{N}$ such that
\[\Vert r^{(k_0)}-r_{\mathcal P}\Vert<\min\{\varepsilon,\frac{\varepsilon}{2L},\delta\},\quad\Vert z^{(n_0)}-z(r^{(k_0)})\Vert<\frac{\varepsilon}{2}.\]
Then we obtain
\[\Vert z^{(n_0)} - z_{\mathcal P}\Vert \leq\Vert z^{(n_0)}-z(r^{(k_0)})\Vert + \Vert z(r^{(k_0)})-z_{\mathcal P}\Vert<\frac{\varepsilon}{2}+L\frac{\varepsilon}{2L}=\varepsilon,\]
which completes the proof.
\end{proof}

\subsection{Numerical Experiments}
The existence of circle patterns with possibly obtuse overlap angles has been studied in a recent series of works~\cite{MR4109913,zhou2019,zhou2021,Zhou2025}.
Our algorithm is compatible with this generalized framework. It enriches the theory of computational realization for circle patterns developed in~\cite{MR1975216,MR3638944,colin1989empilements,MR2886749}, which mainly focuses on tangent circle patterns.
See Fig.~\ref{F-2} for some illustrations generated by our algorithm.
As an application, this provides a practical method for constructing conformal maps from a simply connected domain onto a convex polygon with prescribed interior angles, following the discrete conformal approach pioneered by Thurston~\cite{thurston1985finite} and Rodin and Sullivan~\cite{MR906396}.

\begin{figure}[h]

    \begin{minipage}{0.46\textwidth}
    \centering
        \includegraphics[width=0.97\textwidth]{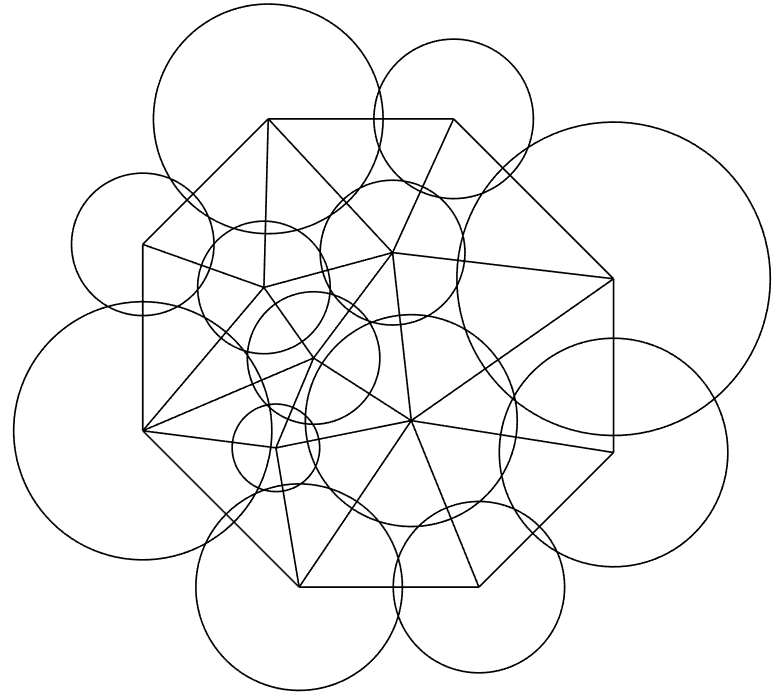}
    \end{minipage}
    \hfill
    \begin{minipage}{0.46\textwidth}
        \centering
        \includegraphics[width=0.86\textwidth]{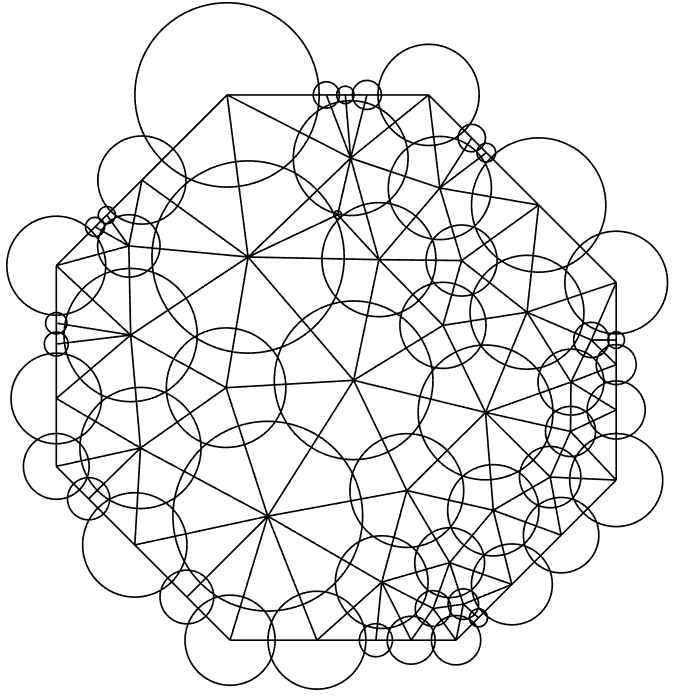}
    \end{minipage}
    \caption{Generating Circle Patterns with Prescribed Data}
    \label{F-2}
\end{figure}

\section{Further Discussion}
In conclusion, we pose the question of whether the proposed algorithm can be extended to realize hyperbolic circle patterns.
The main challenge is to exploit the discrete harmonic property of circle centers described in equation~\eqref{E-2}.
Alternatively, the circle centers may be sequentially determined in accordance with the prescribed combinatorial structure.
However, efficient methods for computing the centers remain a challenge for large-scale circle patterns.

\section*{Acknowledgments}
%\begin{credits}
%\subsubsection{\ackname}
Te Ba and Ze Zhou were supported by NSFC Grant No. 12371075 and Grant No. 12422105.

%\end{credits}

%
% ---- Bibliography ----
%
% BibTeX users should specify bibliography style 'splncs04'.
% References will then be sorted and formatted in the correct style.
%
% \bibliographystyle{splncs04}
% \bibliography{mybibliography}
%

%\bibliographystyle{splncs04}
%\bibliography{a}

\end{document}